\newtheorem{theorem}{Theorem}
\newtheorem{definition}{Definition}
\newtheorem{lemma}{Lemma}
\newtheorem{corollary}{Corollary}
\newcommand{\J}{{\mathcal{J}}}
\newcommand{\F}{{\mathcal{F}}}
\global\long\def\LP{\text{\normalfont LP}}
\global\long\def\MLP{\text{\normalfont Multi-LP}}
\global\long\def\NewMLP{\text{\normalfont New-Multi-LP}}
\global\long\def\OPT{\text{\normalfont OPT}}
\global\long\def\SOL{\text{\normalfont SOL}}
\newcommand{\WT}[1]{{\color{black}{#1}}}
\journal{Theoretical Computer Science}
\begin{document}

\begin{frontmatter}



\title{A polynomial-time approximation scheme for parallel two-stage flowshops under makespan constraint}

\author[a]{Weitian Tong$^*$}
\ead{wtong.research@gmail.com}
\author[a]{Yao Xu}
\ead{yxu@georgiasouthern.edu}
\author[b]{Huili Zhang}
\ead{zhang.huilims@xjtu.edu.cn}

\address[a]{Department of Computer Science, Georgia Southern University, Statesboro, GA 30458, USA}
\address[b]{School of Management, Xi’an Jiaotong University; and State Key Lab for Manufacturing Systems Engineering, Xi'an 710049, China}
\cortext[*]{Corresponding author}

\begin{abstract}
As a hybrid of the Parallel Two-stage Flowshop problem and the Multiple Knapsack problem, 
we investigate the scheduling of parallel two-stage flowshops under makespan constraint,
which was motivated by applications in cloud computing and introduced by Chen et al. \cite{CHG21} recently. 
A set of two-stage jobs are selected and scheduled on parallel two-stage flowshops 
to achieve the maximum total profit while maintaining the given makespan constraint. 
We give a positive answer to an open question about its approximability proposed by Chen et al. \cite{CHG21}.
More specifically, 
based on guessing strategies and rounding techniques for linear programs,
we present a polynomial time approximation scheme (PTAS) 
for the case when the number of flowshops is a fixed constant. 
\end{abstract}

\begin{keyword}
Parallel two-stage flowshops; Makespan constraint; Polynomial-time approximation scheme; Multiple knapsacks; Rounding.
\end{keyword}
\end{frontmatter}

%
%

\section{Introduction}\label{sec_introduction}


We study the scheduling of parallel two-stage flowshops under makespan constraint. 
Suppose there are $m$ identical two-stage flowshops $\F = \{F_j, j \in \{1, 2, \ldots, m\} \}$ and 
a set of $n$ jobs $\J = \{J_i, i \in \{1, 2, \ldots, n\}\}$.
Once assigning a job $J_i$ to some flowshop $F_j$, job $J_i$ needs to be processed non-preemptively on $F_j$
and the first and second operation of $J_i$ has a workload of $a_i$ and $b_i$ respectively. 
Meanwhile, finishing processing $J_i$ brings a profit $p_i$.
The objective is to identify the most profitable subset of jobs, denoted by $\J^{selected}$, such that 
the minimum makespan of $\J^{selected}$, i.e., the completion time of the last job, is bounded by 1. 
\footnote{The makespan constraint can be any real value. Without loss of generality, the boundary value is set to 1, 
as we are always able to scale the job sizes.}
In other words, we aim at packing two-stage jobs to parallel flowshops 
in order to achieve the maximum total profit. 
Therefore, we also name the studied problem as the \emph{(m, 2)-Flowshop-Packing} problem.

The study of parallel two-stage flowshops under makespan constraint was recently introduced by Chen et al. \cite{CHG21}
and was motivated by applications from cloud computing \cite{WCW19a}. 
Receiving a request from a client for a specific resource, 
a server on the cloud will read certain data from disks and then transfer the data back to the client. 
Thus, a server can be regarded as a two-stage flowshop and 
a request from clients can be treated as a two-stage job 
consisting of a disk-reading operation and a network-transition operation. 
The cloud service provider aims to maximize the profit under time constraint.

For a maximization optimization problem $\Pi$ and an approximation algorithm ${\cal A}$, 
the approximation ratio of ${\cal A}$ is defined as
$\max_{I \in \mathcal{I}} \{ {\cal A}(I)/{\mbox{OPT}(I)} \}$, 
where $\mathcal{I}$ is the set of instances. 
A maximization problem admits a \emph{polynomial-time approximation scheme} (PTAS) 
if there is a family of algorithms $\{ \mathcal{A}_{\epsilon}, \epsilon > 0\}$ such that 
each algorithm $\mathcal{A}_{\epsilon}$ has an approximation ratio of $1 - \epsilon$ for any $\epsilon > 0$
and its time complexity is $O(n^{f(1/\epsilon)})$ for instance size $n$ and some function $f$.
We say a PTAS is an \emph{efficient polynomial-time approximation scheme} (EPTAS) if 
the running time of $\mathcal{A}_{\epsilon}$ is in the form of $f(1/\epsilon)\cdot n^{O(1)}$;
and 
we say a PTAS is a \emph{fully polynomial-time approximation scheme} (FPTAS)
if the running time of $\mathcal{A}_{\epsilon}$ is a polynomial in $\frac{1}{\epsilon}$, 
that is, 
$(1/\epsilon)^{O(1)} \cdot n^{O(1)}$.


It is not hard to observe that the classic Knapsack problem \cite{GJ79} is a special case of (1,2)-Flowshop-Packing 
by setting $b_i = 0$ for all jobs. 
Similarly, the (m,2)-Flowshop-Packing problem generalizes the Multiple Knapsack problem with an uniform capacity,
which admits no FPTAS for the two-knapsack case unless P = NP \cite{CK05}. 
When $m$ is part of the input, Multiple Knapsack with an uniform capacity is strongly NP-hard \cite{ZG86}.
Therefore, 
(1,2)-Flowshop-Packing inherits the NP-hardness from the Knapsack problem.
(m,2)-Flowshop-Packing admits no FPTAS even for $m=2$
and becomes strongly NP-hard when $m$ is part of the input.
When $m$ is a constant, 
Chen et al. \cite{CHG21} designed a pseudo-polynomial time algorithm for (m,2)-Flowshop-Packing,
implies the weakly NP-hardness of (m,2)-Flowshop-Packing.

A special case with only one flowshop was initially explored by Dawande et al. \cite{DGR06},
who proposed an approximation algorithm with a ratio of $1/3 - \epsilon$ by 
utilizing the FPTAS for the Knapsack problem \cite{IK75}.
Recently, Chen et al. \cite{CHG21} introduced the multiple-flowshops version 
and study this problem systematically. 
When the number of flowshops is part of the input, 
they first presented an computationally efficient $1/4$-approximation algorithm and then 
improved the ratio to $1/3 - \epsilon$ by exploring the connection 
between the Multiple Knapsack problem and the (m, 2)-Flowshop-Packing problem.
%
When the number of flowshops is a fixed constant, 
they integrated approximation techniques for the classical Knapsack problem and the Parallel Machine Scheduling problem
and designed a $(1/2 - \epsilon)$-approximation algorithm,
which also improves the approximability result by Dawande et al. \cite{DGR06}.
Chen et al. \cite{CHG21} then proposed an open question that 
whether the problem has polynomial-time approximation schemes 
particularly when the number $m$ of flowshops is a fixed constant.
We give a positive answer for the case when $m$ is a fixed constant.

\begin{table}[H]
\begin{center}
\caption{A summary of known results for the (m,2)-Flowshop-Packing problem. 
        Our contributions are highlighted with bold font.}\label{table_results}
\begin{tabular}{|p{0.3\textwidth}|l|l|}
\hline
Problem & Hardness & Approximability  \\ \hline
\multirow{3}{*}{(1,2)-Flowshop-Packing} & \multirow{3}{*}{Weakly NP-hard \cite{DGR06}} 
                 & $1/3-\epsilon$ \cite{DGR06} \\ \cline{3-3}
~& ~             & $1/2-\epsilon$ \cite{CHG21} \\ \cline{3-3}
~& ~             & \textbf{$1-\epsilon$ (Our PTAS)} \\ \hline
\multirow{2}{0.3\textwidth}{(m,2)-Flowshop-Packing ($m \geq 2$ is fixed)} & \multirow{2}{*}{Weakly NP-hard \cite{CHG21}, No FPTAS \cite{CK05}}
                 & $1/2-\epsilon$ \cite{CHG21} \\ \cline{3-3}
~& ~             & \textbf{$1-\epsilon$ (Our PTAS)} \\ \hline
(m,2)-Flowshop-Packing ($m \geq 2$ is part of the input) & Strongly NP-hard  \cite{ZG86}
                 & $1/3-\epsilon$ \cite{CHG21} \\
\hline
\end{tabular}
\end{center}
\end{table}

Observing that scheduling jobs on one two-stage flowshop to achieve the minimum makespan 
can be resolved in polynomial time by the famous Johnson's Algorithm \cite{Joh54},
we roughly solve the studied problem by two steps:
(1) identifying a high-profit job subset;
(2) scheduling the selected jobs on each flowshop with Johnson's Algorithm.
The first step is more involved as 
it should be on the alert for jobs with large workload
so that the second step will not violate the makespan constraint.
Our main contribution is a PTAS for the (m,2)-Flowshop-Packing problem 
by combining guessing strategies and rounding techniques in linear programming. 
It is worth remarking that \emph{guessing} a quantity in polynomial time
means identifying a polynomial number of values for this quantity in polynomial time.

Our first idea is to guess in polynomial time the most profitable subset of size $\frac{1+\epsilon}{\epsilon}$ in the optimal solution,
where $\epsilon > 0$ is chosen such that $\frac{1+\epsilon}{\epsilon}$ is an integer.
Consequently, the remaining jobs that need to be selected are relatively cheaper. 
Based on the quantitive formula (refer to Eq.(\ref{eq_Johnson_makespan})) to estimate the makespan 
\WT{when a job sequence is under a permutation schedule},
our second idea is to design a linear program to select the cheap jobs fractionally. 
It turns out that this linear program guarantees a strong property.
That is, except for at most a constant number of cheap jobs, 
most cheap jobs can be identified integrally via the optimal solution to this linear program.

To illustrate our ideas more clearly, 
we start with resolving the (1,2)-Flowshop-Packing problem by designing a PTAS for it.
Then we extend our ideas to the general case, i.e., the (m,2)-Flowshop-Packing problem, and
present another PTAS when the number of flowshops is a fixed constant.
Our results significantly improve the approximation ratios obtained 
by Chen et al. \cite{CHG21} and Dawande et al. \cite{DGR06}.
Please refer to Table \ref{table_results} for existing results and our contributions.

\paragraph{Organization}
We review the most related works in Section \ref{sec_relatedwork}.
Section \ref{sec_preliminary} introduces some essential notions and terminologies.
We present and analyze the PTASes for the (1,2)-Flowshop-Packing and (m,2)-Flowshop-Packing problems
in Section \ref{sec_ptas1} and Section \ref{sec_ptas2} respectively. 
In Section \ref{sec_conclusion}, we make a conclusion 
and discuss future research directions.

\section{Related Work}\label{sec_relatedwork}
The (m, 2)-Flowshop-Packing problem aims at maximizing the total profit by 
selecting and scheduling jobs on parallel two-stage flowshops with makespan constraint.
It is closely related to the Parallel Two-stage Flowshop Scheduling problem \cite{Kov85} and 
the Multiple Knapsack problem \cite{Kel99}. 

%
\paragraph*{The Parallel Two-stage Flowshop Scheduling problem}
Removing the makespan constraint and replacing the objective with makespan minimization
result in another scheduling problem, named \emph{Parallel Two-stage Flowshop Scheduling},
which has attracted quite a few attentions \cite{Kov85,HKA96,VE00,ZV12,DTL17,DJL20,WCW19a,WCW19b,WCW20a,WCW20b}. 
The parallel two-stage flowshop scheduling problem is NP-hard when $m\geq 2$ 
and becomes strongly NP-hard if $m$ is part of the input \cite{DJL20}. 
Kovalyov \cite{Kov85}, Dong et al. \cite{DTL17}, and Wu et al. \cite{WCW19a}
designed FPTAS independently for the parallel two-stage flowshop scheduling problem
when $m$ is a fixed constant. 
For the case where $m$ is part of the input, 
Dong et al. \cite{DJL20} eventually presented a PTAS
after a few explorations of its approximability by Wu et al. \cite{WCW20b,WCW20a}.
It is worth noting that for the parallel multi-stage flowshop scheduling problem, 
Tong et al. \cite{TMG18} proposed a PTAS when both the number of flowshops and the number of stages are constant. 
It is still open whether a PTAS exists for the parallel multi-stage flowshop scheduling problem 
when $m$ is part of the input.

\paragraph*{The Multiple Knapsack problem}
The (m, 2)-Flowshop-Packing problem reduces the Multiple Knapsack problem 
by setting the workload of each second-stage operation to zero. 
The Multiple Knapsack problem is weakly NP-hard for any constant number of knapsacks,
as it admits a pseudo-polynomial dynamic programming \cite{GRR19}, which generalizes the classic one for the Knapsack problem.
An FPTAS exists for the classic Knapsack problem \cite{IK75}
while Multiple Knapsack with at least two knapsacks admits no FPTASes \cite{CK05}. 
When the number of knapsacks is a part of the input, 
Multiple Knapsack becomes strongly NP-hard \cite{ZG86} even for the special case 
where all knapsacks have an uniform capacity. 
Kellerer \cite{Kel99} developed a PTAS for the special case of the Multiple Knapsack problem 
where all knapsacks have same capacity. 
Chekuri and Khanna \cite{CK05} proposed a PTAS for the general Multiple Knapsack problem 
with running time $O(n^{\log (1/\epsilon) / \epsilon^8})$.
They first cleverly rounded instances in a more structured one 
that has logarithmic profit values and size values;
then grouped knapsacks by their capacities in a structured way;
finally sequentially packed the most profitable items, large-size items, 
and the remaining cheap and small-size items with different strategies. 
Jansen \cite{Jan10,Jan12} improved the time efficiency by 
presenting an EPTAS with parameterized running time $O\left(2^{\frac{1}{\epsilon}\log^4 (1/\epsilon)} + poly(n) \right)$. 
\WT{All the above PTASes and EPTASes} were designed for the case where $m$ is part of the input.

Both Chen et al. \cite{CHG21} and  Dawande et al. \cite{DGR06} 
utilized the connection between the Multiple Knapsack problem and the (m, 2)-Flowshop-Packing problem
to design approximation algorithms for the (m, 2)-Flowshop-Packing problem. 
In particular, they observed that a (Multiple) Knapsack instance can be constructed from an (m, 2)-Flowshop-Packing instance
by linearly combining the workload at two stages for each job. 
Our PTAS does not utilize such connections. 
Instead, we rely on the structure of the optimal scheduling (with respect to the makespan) guaranteed by Johnson's Algorithm
on each flowshop.

\section{Preliminary}\label{sec_preliminary}

In this section, we define the studied problem formally 
and introduce essential notations and important concepts.

A standard two-stage flowshop contains one machine at every stage and
a two-stage job has two operations.
Once a job is assigned to a flowshop, 
its two operations are processed non-preemptively on the two sequential machines in the flowshop, respectively.
In particular, the second operation cannot start processing until the first operation has been completed. 
The \emph{makespan} is defined as the completion time of the last job.
Denote the set $\{1, 2, \ldots, n\}$ by $[n]$ for any positive integer $n$.
The formal definition of the scheduling of parallel two-stage flowshops under makespan constraint 
(or the (m,2)-Flowshop-Packing problem) 
is shown in Definition \ref{def_problem}.

\begin{definition}[The (m,2)-Flowshop-Packing problem]\label{def_problem}
Given a set of $m$ identical two-stage flowshops $\F = \{F_j, j \in [m] \}$ 
and a set of $n$ jobs $\J = \{J_i = (a_i, b_i, p_i), i \in [n]\}$,
where $a_i$ and $b_i$ denote the processing time or workload of $J_i$ on the first and second stage respectively 
and $p_i$ represents the profit by processing $J_i$, 
the goal is to identify a subset of jobs, denoted by $\J^{selected}$, and to schedule them on the given flowshops 
such that
the total profit is maximized while the makespan is limited within 1.
\end{definition}

Assume $a_i + b_i \leq 1$ holds for all $i \in [n]$ without loss of generality. 
We abuse notations $a$, $b$, $p$ as linear summation functions over a job set. 
That is, for any subset of jobs $\J'$, 
$a(\J') = \sum_{J_i \in \J'} a_i$, $b(\J') = \sum_{J_i \in \J'} b_i$, and $p(\J') = \sum_{J_i \in \J'} p_i$.
Suppose $\J^*$ is the job set chosen in an optimal solution.
Let $\OPT = p(\J^*) = \sum_{J_i \in \J^*} p_i$ denote the optimal total profit.

The classic Two-Stage Flowshop problem minimizes the makespan of a single two-stage flowshop
and can be solved optimally by the well-known Johnson's Algorithm \cite{Joh54}.
More precisely, we have the following Theorem. 

\begin{theorem}\cite{Joh54,BEP19}\label{thm_Johnson}
For the two-stage flowshop problem, there exists a permutation schedule which returns the minimum makespan. 
In this optimal schedule, job $J_i$ precedes job $J_{i'}$ if 
\WT{$\min\{a_i, b_{i'}\} \leq \min\{a_{i'}, b_{i}\} $}.
\end{theorem}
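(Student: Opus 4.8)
The plan is to recall the classical proof of Johnson's rule, which rests on two ingredients: showing that permutation schedules are dominant, and an adjacent-interchange argument. \emph{Permutation schedules suffice.} Starting from any feasible schedule, left-shifting every first-stage operation leaves machine~1 idle-free and delays nothing. Let $\rho=(\rho_1,\dots,\rho_n)$ be the order in which the second-stage operations start, with $s_k$ the start time of the second operation of $\rho_k$; since the first operations of $\rho_1,\dots,\rho_k$ are all completed by time $s_k$, machine~1 has done at least $\sum_{j\le k}a_{\rho_j}$ work by then, so $\sum_{j\le k}a_{\rho_j}\le s_k$, and processing on the single second-stage machine in order $\rho$ forces $s_{k+1}\ge s_k+b_{\rho_k}$. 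Rebuilding a schedule that runs machine~1 in order $\rho$ with no idle and runs machine~2 in order $\rho$ as early as possible, a short induction using these two inequalities shows its makespan is at most the original one. Hence a schedule is captured by a permutation $\pi$, and unrolling $C_2(\pi_k)=\max\{C_2(\pi_{k-1}),C_1(\pi_k)\}+b_{\pi_k}$ with $C_1(\pi_k)=\sum_{j\le k}a_{\pi_j}$ gives
\[
 C_{\max}(\pi)=\max_{1\le k\le n}\left(\sum_{j=1}^{k}a_{\pi_j}+\sum_{j=k}^{n}b_{\pi_j}\right).
\]

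\emph{Adjacent interchange.} Next I would show that two adjacent jobs $u,v$ with $\min\{a_u,b_v\}\le\min\{a_v,b_u\}$ can always be ordered with $u$ before $v$ without harm. Writing $A$ for the first-stage load scheduled before the pair and $q$ for the machine-$2$ completion time of the job just before it, expanding the recursion gives the machine-$2$ completion time of the later of the two jobs as
\[
 \max\bigl\{\,q+b_u+b_v,\ A+a_u+b_u+b_v,\ A+a_u+a_v+b_v\,\bigr\}
\]
for the order $(u,v)$, and the same expression with $u,v$ interchanged for the order $(v,u)$; because the machine-$1$ completion time immediately after the pair equals $A+a_u+a_v$ in either case, nothing downstream is affected, so it suffices to compare these two maxima. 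Cancelling the shared term and subtracting $A$, the comparison collapses to $a_u+b_v+\max\{b_u,a_v\}\le a_v+b_u+\max\{a_u,b_v\}$, which I would verify by splitting on whether $\min\{a_u,b_v\}$ equals $a_u$ or $b_v$ and using the hypothesis to bound each side termwise.

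\emph{Sorting.} The condition is realized by the explicit permutation $\pi^{J}$ that lists all jobs with $a_i\le b_i$ first in nondecreasing $a_i$ and then the remaining jobs in nonincreasing $b_i$; one checks directly that no adjacent pair of $\pi^{J}$ violates the rule. Starting from an optimal permutation and repeatedly swapping an adjacent pair that is out of $\pi^{J}$-order (a bubble sort, which terminates since each swap reduces the number of inversions relative to $\pi^{J}$) never increases the makespan by the interchange step, so $\pi^{J}$ is optimal, which proves the theorem.

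The main obstacle is the bookkeeping in the interchange step: one has to argue carefully that the only quantity propagating past the swapped pair is the machine-$2$ completion time of its later job, the machine-$1$ prefix being unchanged, so that the global makespan comparison genuinely reduces to the three-term local maxima above. Once that reduction is in place, the case analysis and the sorting argument are routine.
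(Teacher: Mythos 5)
The paper does not prove this statement at all: Theorem~\ref{thm_Johnson} is imported as a known result with the citation \cite{Joh54,BEP19}, so there is no in-paper argument to compare against. Your proposal is the standard textbook proof of Johnson's rule (dominance of permutation schedules, adjacent pairwise interchange, then sorting into the two-group order), and it is correct. Two small remarks. First, the case analysis you defer at the end of the interchange step closes in one line via the identity $x+y=\max\{x,y\}+\min\{x,y\}$: writing $a_u+b_v=\max\{a_u,b_v\}+\min\{a_u,b_v\}$ and $a_v+b_u=\max\{a_v,b_u\}+\min\{a_v,b_u\}$ reduces your inequality $a_u+b_v+\max\{b_u,a_v\}\le a_v+b_u+\max\{a_u,b_v\}$ exactly to the hypothesis $\min\{a_u,b_v\}\le\min\{a_v,b_u\}$. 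Second, your sorting step correctly sidesteps the known pitfall that the relation $\min\{a_i,b_{i'}\}\le\min\{a_{i'},b_i\}$ is not transitive: by targeting the explicit permutation $\pi^{J}$ (jobs with $a_i\le b_i$ in nondecreasing $a_i$, then the rest in nonincreasing $b_i$) and checking that \emph{every} pair in $\pi^{J}$-order satisfies the relation, each bubble-sort swap legitimately invokes the interchange lemma. The resulting optimal permutation also has every subsequence in Johnson's order, which is exactly the property the paper relies on in Corollary~\ref{corollary} and Eq.~(\ref{eq_Johnson_makespan}).
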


We say a job sequence is under Johnson's order if 
\WT{$\min\{a_i, b_{i'}\} \leq \min\{a_{i'}, b_{i}\} $} implies job $J_i$ precedes job $J_{i'}$.
The order of any two jobs is independent of all the other jobs under Johnson's order.
Then Corollary \ref{corollary} follows immediately.

\begin{corollary}\label{corollary}
Given a job sequence under Johnson's order, any subsequence of this job sequence is under Johnson's order.
\end{corollary}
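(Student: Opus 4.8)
The plan is to reduce the claim to the purely \emph{local} (pairwise) nature of Johnson's order. First I would recall the definition made just above: a job sequence is under Johnson's order precisely when, for every ordered pair of jobs $(J_i, J_{i'})$ in the sequence with $J_i$ appearing before $J_{i'}$, the inequality $\min\{a_i, b_{i'}\} \leq \min\{a_{i'}, b_i\}$ holds. The crucial observation, already highlighted in the text (``the order of any two jobs is independent of all the other jobs''), is that this defining condition refers only to the two jobs $J_i$ and $J_{i'}$ themselves and says nothing about which other jobs lie between them or around them.

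Next I would make precise what a subsequence is: if the given sequence is $\sigma = (J_{\sigma(1)}, \ldots, J_{\sigma(n)})$ and a subsequence is obtained by picking positions $1 \leq k_1 < k_2 < \cdots < k_t \leq n$, then the subsequence is $\tau = (J_{\sigma(k_1)}, \ldots, J_{\sigma(k_t)})$. The key structural fact to state here is that a subsequence \emph{preserves relative order}: for any two positions $r < s$ in $\tau$ we have $k_r < k_s$, and hence $J_{\sigma(k_r)}$ precedes $J_{\sigma(k_s)}$ in the original sequence $\sigma$ as well.

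The argument then closes in one line. Take any two jobs of $\tau$, say $J$ before $J'$ in $\tau$; by the order-preservation step, $J$ precedes $J'$ in $\sigma$; since $\sigma$ is under Johnson's order, the pairwise inequality $\min\{a_J, b_{J'}\} \leq \min\{a_{J'}, b_J\}$ holds; but this is exactly the condition required of that ordered pair in $\tau$. As the pair was arbitrary, $\tau$ is under Johnson's order.

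I do not expect a genuine obstacle, since the statement is essentially a restatement of the fact that Johnson's comparator is a fixed relation on pairs of jobs. The only point requiring a little care in the write-up is to keep ``under Johnson's order'' stated as a condition on ordered pairs (an \emph{implication}), rather than as a requirement of total comparability, so that ties — pairs $(J_i, J_{i'})$ for which both $\min\{a_i,b_{i'}\} \leq \min\{a_{i'},b_i\}$ and the reverse inequality hold, and hence either order is admissible — cause no trouble: restricting to a subsequence can only remove pairs, never create a violating one.
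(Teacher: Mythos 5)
Your proof is correct and takes the same route as the paper, which simply observes that ``the order of any two jobs is independent of all the other jobs under Johnson's order'' and declares the corollary immediate; you have merely spelled out that pairwise, order-preserved argument in full. Your restatement of ``under Johnson's order'' as the condition that every ordered pair appearing in the sequence satisfies the inequality is the contrapositive of the paper's phrasing, and your explicit remark about ties is a sensible clarification rather than a deviation.
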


\WT{
Without loss of generality, all job sequences under consideration in the rest of this paper 
are assumed in Johnson's order by default.  
}

\WT{Given any permutation schedule for two-stage flowshop jobs,
it is well-known that its makespan can be computed as 
\begin{equation}\label{eq_Johnson_makespan}
\max_{s \in [n]} \left\{ \sum_{i=1}^s a_i + \sum_{i=s}^n b_i \right\}.
\end{equation}
}
We say a job is a \emph{critical} job if it has the 
the subscript $\arg \max_{s \in [n]} \left\{ \sum_{i=1}^s a_i + \sum_{i=s}^n b_i \right\}$.
The tie is broken by taking the job with the smallest subscript index.

\section{PTAS for the (1,2)-Flowshop-Packing problem}\label{sec_ptas1}

In this section, a PTAS is presented for the (1,2)-Flowshop-Packing problem,
in which there is only one two-stage flowshop. 
Solving (1,2)-Flowshop-Packing can be naturally partitioned into two steps:
selecting a subset of jobs and then finding a feasible schedule on the flowshop for the chosen jobs. 
As Johnson's Algorithm is able to schedule jobs to achieve the minimum makespan,
we focus on the first step.
For any small positive constant $\epsilon \in (0, 1)$,
the rough idea is to guess a constant number $\frac{1+\epsilon}{\epsilon}$ of the most profitable jobs in the optimal job set $\J^*$ 
and then select cheaper jobs carefully such that 
the chosen job set $\J^{selected}$ has its overall profit at least $(1-\epsilon)\cdot \OPT$ 
while the minimum makespan is at most 1.

In sequel, we assume $|\J^*| > \frac{1+\epsilon}{\epsilon}$, as otherwise the optimal solution can be found
by exhausting $n^{O(1/\epsilon)}$ subsets, each having a size of at most $\frac{1+\epsilon}{\epsilon}$.
We guess $\frac{1+\epsilon}{\epsilon}$ most profitable jobs in $\J^*$
and denote this subset of jobs by $\J^{profitable}$.
%

\begin{lemma}\label{lemma_cheap_jobs}
The cheapest job in $\J^{profitable}$ has a profit less than $\epsilon \cdot \OPT$.
Then, for any job in $\J^* \backslash \J^{profitable}$, its profit is less than $\epsilon \cdot \OPT$.
\end{lemma}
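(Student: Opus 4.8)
The plan is to prove this by a straightforward averaging/pigeonhole argument. First I would observe that $\J^{profitable}$ consists of exactly $\frac{1+\epsilon}{\epsilon}$ jobs, all drawn from the optimal set $\J^*$, and these are chosen to be the $\frac{1+\epsilon}{\epsilon}$ jobs of largest profit in $\J^*$. Since $\sum_{J_i \in \J^{profitable}} p_i \leq p(\J^*) = \OPT$ and there are $\frac{1+\epsilon}{\epsilon}$ jobs in $\J^{profitable}$, the average profit of a job in $\J^{profitable}$ is at most $\OPT \big/ \frac{1+\epsilon}{\epsilon} = \frac{\epsilon}{1+\epsilon}\cdot\OPT$. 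The cheapest job in $\J^{profitable}$ has profit no larger than this average, hence at most $\frac{\epsilon}{1+\epsilon}\cdot\OPT < \epsilon\cdot\OPT$, giving the first claim.

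For the second claim, I would use the defining property of $\J^{profitable}$: it contains the $\frac{1+\epsilon}{\epsilon}$ most profitable jobs of $\J^*$. Therefore every job in $\J^* \backslash \J^{profitable}$ has profit at most the profit of the cheapest job in $\J^{profitable}$ (otherwise it would have been included among the most profitable jobs). Combining with the first claim, any job in $\J^* \backslash \J^{profitable}$ has profit less than $\epsilon \cdot \OPT$.

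I do not anticipate a genuine obstacle here; the lemma is essentially a counting bound. The only point requiring mild care is making the averaging inequality clean: one should note that since $|\J^*| > \frac{1+\epsilon}{\epsilon}$ (the assumed case), the set $\J^{profitable}$ is well-defined with exactly $\frac{1+\epsilon}{\epsilon}$ jobs, and that the profits are nonnegative so that $\sum_{J_i \in \J^{profitable}} p_i \leq \OPT$ holds. One should also confirm $\frac{\epsilon}{1+\epsilon} < \epsilon$, which is immediate for $\epsilon > 0$ since $1 + \epsilon > 1$. The strictness of the final inequality (``less than'' rather than ``at most'') follows from this strict comparison.
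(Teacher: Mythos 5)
Your proof is correct and is essentially the paper's argument in contrapositive form: the paper assumes $p_{\min} \geq \epsilon\cdot\OPT$ and derives $p(\J^{profitable}) \geq (1+\epsilon)\cdot\OPT > \OPT$, which is the same pigeonhole/averaging bound you apply directly. The second claim is handled identically in both, via the definition of $\J^{profitable}$ as the most profitable jobs of $\J^*$.
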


\begin{proof}
Let $p_{\min} = \min_{ J \in \J^{profitable} }  p(J) $. 
Assume $p_{\min} \geq \epsilon \cdot \OPT$,
the total profit of jobs in $\J^{profitable}$ is at least
$$ p(\J^{profitable}) \geq \frac{1+\epsilon}{\epsilon} \cdot p_{\min} \geq
        \frac{1+\epsilon}{\epsilon} \cdot \epsilon \cdot \OPT 
        = (1 + \epsilon) \cdot \OPT 
        > \OPT,$$
which contradicts the definitions of $\J^{profitable}$ and $\OPT$. 


\end{proof}

As the jobs in $\J^{profitable}$ are expected to be the $\frac{1+\epsilon}{\epsilon}$ most profitable jobs in $\J^*$,
all jobs in $\J \backslash \J^{profitable}$ with  profit greater than $p_{\min}$
can be ignored temporarily. 
Let $\J' = \{ J \mid p(J) \leq p_{\min}, J \in \J \} \cup \J^{profitable}$.

Then we guess the critical job, say $J_s$, in the optimal solution $\J^*$. 
A linear program, denoted by $\LP$, is formulated to assign the cheaper jobs fractionally. 
The variable $x_{i}$ is defined as the fraction of job $J_i$ that is assigned to the given flowshop.
\begin{align*}
\max \quad  & \sum_{i\in [n] } p_i \cdot x_i &  & ~~~~ \LP\\
s.t. \quad  & \sum_{i=1}^s a_i x_i + \sum_{i=s}^n b_i x_i \leq 1  & & \mbox{~~~~(makespan constraint)}\\  
~           & x_i = 1, & J_i \in \J^{profitable} \cup \{J_s \} & \mbox{~~~~(guessed assignments)} \\
~           & x_i = 0, & J_i \in \J \backslash \left( \J' \cup \{J_s \} \right) & \\
~           & x_i \in [0, 1], & J_i \in \J' \backslash \left( \J^{profitable} \cup \{J_s \} \right) &
\end{align*}

\begin{lemma}\label{lemma_lp}
If the guesses for the most profitable job set $\J^{profitable}$ and the critical job $J_s$ are correct,
\begin{enumerate}
\item the linear program $\LP$ has a feasible solution, 
        and $\OPT$ is upper bounded by the optimal objective value of $\LP$;
\item there is a polynomial time algorithm to obtain a job set from the optimal basic feasible solution to $\LP$
        such that the makespan is at most 1 and the total profit is at least $(1 - \epsilon) \cdot \OPT$.
\end{enumerate}

\end{lemma}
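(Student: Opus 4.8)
The plan is to establish the two claims in order. For the first claim, I would consider the indicator vector $x^*$ of the optimal job set $\J^*$, i.e., $x^*_i = 1$ if $J_i \in \J^*$ and $x^*_i = 0$ otherwise. Since the guesses are correct, $\J^{profitable} \cup \{J_s\} \subseteq \J^*$, so $x^*$ satisfies the guessed-assignment constraints $x_i = 1$. By Lemma~\ref{lemma_cheap_jobs}, every job in $\J^* \setminus \J^{profitable}$ has profit at most $p_{\min}$, hence $\J^* \subseteq \J'$, so $x^*_i = 0$ for every $J_i \in \J \setminus (\J' \cup \{J_s\})$, matching the zero-assignment constraints. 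For the makespan constraint, I would invoke Eq.~(\ref{eq_Johnson_makespan}) together with the assumption that $J_s$ is the critical job of the optimal schedule of $\J^*$: the makespan of $\J^*$ equals $\sum_{i \le s, J_i \in \J^*} a_i + \sum_{i \ge s, J_i \in \J^*} b_i = \sum_{i=1}^s a_i x^*_i + \sum_{i=s}^n b_i x^*_i$, and this is at most $1$ by feasibility of the optimal solution. Hence $x^*$ is feasible for $\LP$, and since its objective value is exactly $\OPT$, the optimal value of $\LP$ is at least $\OPT$.

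For the second claim, I would take an optimal \emph{basic} feasible solution $\bar x$ to $\LP$ — obtainable in polynomial time — and argue that it is almost integral. The linear program has $n$ variables, but all but those indexed by $\J' \setminus (\J^{profitable} \cup \{J_s\})$ are pinned to $0$ or $1$ by equality constraints; among the genuinely free variables, a basic solution is determined by a set of tight constraints, of which only the single makespan inequality is not a box constraint $x_i \in \{0,1\}$. A standard argument then shows at most one free variable can be strictly fractional. I would then form $\J^{selected}$ by taking all jobs with $\bar x_i = 1$ (which includes $\J^{profitable} \cup \{J_s\}$) and \emph{discarding} the at most one fractional job. Scheduling $\J^{selected}$ on the flowshop by Johnson's order (Corollary~\ref{corollary}), its makespan is, by Eq.~(\ref{eq_Johnson_makespan}), at most $\sum_{i=1}^s a_i \bar x_i + \sum_{i=s}^n b_i \bar x_i \le 1$, because rounding a fractional $\bar x_i$ down to $0$ only decreases the left-hand side, and the true makespan of the rounded set is bounded by the value attained at the split point $s$. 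The profit lost by discarding the fractional job is at most $p_{\min} < \epsilon \cdot \OPT$ by Lemma~\ref{lemma_cheap_jobs}, so $p(\J^{selected}) \ge (\text{optimal value of }\LP) - \epsilon \cdot \OPT \ge \OPT - \epsilon \cdot \OPT = (1-\epsilon)\OPT$.

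The main obstacle, and the step deserving the most care, is the near-integrality of the basic optimal solution: I must argue cleanly that among the free variables in $\J' \setminus (\J^{profitable} \cup \{J_s\})$ at most one is fractional in a basic feasible solution. The point is that a basic feasible solution is the unique solution of the subsystem of constraints made tight at it; the only available non-box tight constraint is the makespan inequality, so at most one coordinate can avoid being pinned to an integer endpoint. A secondary subtlety is verifying that rounding the single fractional variable down to $0$ keeps the makespan expression in Eq.~(\ref{eq_Johnson_makespan}) bounded by $1$ for the resulting integral set — this is immediate since every term in that expression is nonnegative, so decreasing any $x_i$ cannot increase $\sum_{i=1}^s a_i x_i + \sum_{i=s}^n b_i x_i$, and the maximum over all split points is no larger than this particular value once the set is fixed. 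Everything else is bookkeeping with the guessing budget $n^{O(1/\epsilon)}$ and the polynomial-time solvability of $\LP$.
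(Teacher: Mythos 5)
Your proposal follows the paper's proof almost step for step: part~(1) via the indicator vector of $\J^*$ (the paper invokes Corollary~\ref{corollary} and Eq.~(\ref{eq_Johnson_makespan}) exactly as you do), and part~(2) via the observation that a basic feasible solution of $\LP$ has at most one strictly fractional coordinate (the paper cites the Rank Lemma where you argue directly from the tight-constraint characterization of basic solutions --- these are the same argument), followed by discarding the fractional job and charging the loss to $p_{\min} < \epsilon\cdot\OPT$ using Lemma~\ref{lemma_cheap_jobs}. Up to that point the two proofs are interchangeable.

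The one place you go beyond the paper is in trying to verify that $\J^{selected}$ actually has makespan at most $1$, and there your justification is wrong: you write that ``the maximum over all split points is no larger than this particular value once the set is fixed,'' but Eq.~(\ref{eq_Johnson_makespan}) is a \emph{maximum} over split points, so the value at the guessed split point $s$ is a lower bound on the makespan of $\J^{selected}$, not an upper bound. The single makespan constraint of $\LP$ only controls $\sum_{i\le s} a_i x_i + \sum_{i\ge s} b_i x_i$; it says nothing about the analogous sum for $t\ne s$. Since the support of the optimal basic solution need not coincide with $\J^*$, its critical job need not be $J_s$: a job placed after $s$ in Johnson's order contributes only its (possibly tiny) $b_i$ to the constraint while contributing its full $a_i$ to the makespan at later split points, so the rounded set could in principle violate the deadline. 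The correctness of the guess (that $J_s$ is critical for $\J^*$) does not repair this. To be fair, the paper's own proof is silent on exactly this point --- it establishes the profit bound and never argues the makespan bound for $\J^{selected}$ --- so you have surfaced a genuine subtlety of the argument rather than introduced a new one; but the inequality you use to dismiss it goes in the wrong direction, so this step remains a gap in your write-up, and one cannot simply appeal to the final feasibility check in Algorithm~\ref{ptas1}, since the approximation guarantee needs the specific candidate produced from the correct guesses to survive that check.
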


\vspace{10pt}
\begin{algorithm}[H]
\caption{PTAS for (1,2)-Flowshop-Packing}\label{ptas1}

\textbf{Input}: any constant $\epsilon \in (0, 1)$ and a (1,2)-Flowshop-Packing instance $\{\F, \J \}$\;
\textbf{Output}: a job subset $\J^{selected} \subseteq \J$
                  with the makespan upper bounded by 1\;

~\\

Let $\mathcal{C} = \left\{\J^{guess} \mid |\J^{guess}| \leq \frac{1+\epsilon}{\epsilon}, \J^{guess} \subseteq \J \right\}$\;

\uIf{$n > \frac{1+\epsilon}{\epsilon}$}{
  $\mathcal{C}_{\epsilon} = \left\{\J^{guess} \mid |\J^{guess}| = \frac{1+\epsilon}{\epsilon}, \J^{guess} \subseteq \J \right\}$\;  
}\Else{
  $\mathcal{C}_{\epsilon} = \emptyset$\;   
}

\For{every $\J^{guess} \in \mathcal{C}_{\epsilon}$}{
    $\J^{profitable} \leftarrow \J^{guess}$\;
    Let $p_{\min} = \min_{ J \in \J^{profitable} }  p(J) $\;
    Let $\J' = \{ J \mid p(J) \leq p_{\min}, J \in \J \} \cup \J^{profitable}$\;
    \For{every $J_s \in \J'$} { 
      Use $\J^{profitable}$, $\J'$, and $J_s$ to construct $\LP$\;
      \If{$\LP$ admits a feasible solution}{
              Let $x$ be an optimal basic feasible solution to $\LP$\;
              $\J^{selected} \leftarrow$ $\left\{ J_i \mid x_i = 1 \right\}$\;
              $\mathcal{C} \leftarrow \mathcal{C} \cup \J^{selected} $\;
      }
    }
  }

Let $profit = -\infty$\;
\For{every $\J^{candidate} \in \mathcal{C}$}{
  Let $\pi$ be the schedule of $\J^{candidate}$ by Johnson's Algorithm\;
  \If{$p(\J^{candidate}) > profit$ and $\pi$'s makespan is at most 1}{
      $\J^{selected} \leftarrow \J^{candidate}$ and $profit \leftarrow p(\J^{candidate})$\;
  }
}
\Return $\J^{selected}$
\end{algorithm}
\vspace{10pt}

\begin{proof}
Suppose $\J^*$ is sorted in Johnson's order.
\WT{The minimum makespan for $\J^*$ can be computed by Eq.(\ref{eq_Johnson_makespan}).}
By Corollary \ref{corollary},
$\{x_i = 1 \mid J_i \in \J^* \} \cup  \{x_i = 0 \mid J_i \not\in \J^* \} $ is a feasible solution to $\LP$.
Then the correctness of the first claim follows immediately.

Due to the Rank Lemma for linear programs (Chapter 2.1 in \cite{LRS11}),
the single non-trivial constraint of $\LP$, i.e., the \emph{makespan constraint}, implies that
at most one $x_i$ is exactly fractional, i.e., $x_i \in (0, 1)$.
Suppose $x$ is an optimal basic feasible solution to $\LP$. 
The job set $\{J_i \mid x_i = 1, i \in [n]\}$ is chosen as $\J^{selected}$.

Let $\OPT_{\LP}$ be the optimal objective value of $\LP$
and $\SOL$ be the total profit of $\J^{selected}$.
Then, 
we have
\begin{eqnarray*}
\OPT_{\LP} & \leq & \max_{J_i \in \J' \backslash \left( \J^{profitable} \cup \{J_s \} \right)} \{p_i\} + \sum_{J_i \in \J^{selected}} p_i \\ 
& \leq & p_{\min} + \SOL \\
& \leq & \epsilon \cdot \OPT + \SOL,
\end{eqnarray*}
where the last inequality holds due to the correct guess of $\J^{profitable}$.

By the first part of this Lemma, 
\begin{equation*}
\SOL \geq \OPT_{\LP} - \epsilon \cdot \OPT\geq 
\OPT - \epsilon \cdot \OPT = 
(1 - \epsilon) \cdot \OPT.
\end{equation*}

This completes the proof for this lemma.
\end{proof}

Putting the above ideas together, our algorithm maintains all candidates of $\J^{selected}$.
Let $\mathcal{C}$ denote this collection. 
To consider the case that $\J^*$ contains at most $\frac{1+\epsilon}{\epsilon}$ jobs, 
$\mathcal{C}$ is initialized by $\left\{\J^{guess} \mid |\J^{guess}| \leq \frac{1+\epsilon}{\epsilon}, \J^{guess} \subseteq \J \right\}$.
For the case $|\J^*| > \frac{1+\epsilon}{\epsilon}$,
we exhaust all possible candidates of $\J^{profitable}$, that is, a job subset of size exactly $\frac{1+\epsilon}{\epsilon}$.
For each candidate of $\J^{profitable}$, we guess a critical job and use $\LP$ to obtain a candidate of $\J^{selected}$.
Finally, among all candidates, we select the one with a feasible schedule and the maximum profit.
Note that some candidates in $\mathcal{C}$ may not admit a feasible schedule satisfying the limited makespan of 1.
A detailed description of our algorithm is provided in Algorithm \ref{ptas1}.

\begin{theorem}\label{thm1}
Algorithm \ref{ptas1} is a PTAS for the (1,2)-Flowshop-Packing problem.
\end{theorem}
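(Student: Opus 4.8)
The plan is to derive the theorem from Lemma~\ref{lemma_cheap_jobs} and Lemma~\ref{lemma_lp} by (a) a case distinction on $|\J^*|$ that exhibits a single candidate in the collection $\mathcal{C}$ maintained by Algorithm~\ref{ptas1} which is simultaneously feasible (makespan at most $1$) and of profit at least $(1-\epsilon)\cdot\OPT$, and (b) a direct count of the number of guesses to bound the running time. I would first note that feasibility of the \emph{returned} set is automatic: the last loop overwrites $\J^{selected}$ only when the Johnson schedule of the current candidate has makespan at most $1$, and since $\emptyset\in\mathcal{C}$ with makespan $0$ the output is always well defined and feasible. Hence it suffices to produce one feasible candidate of profit at least $(1-\epsilon)\cdot\OPT$, because the loop returns the most profitable feasible candidate. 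We may assume $(1+\epsilon)/\epsilon$ is an integer: otherwise replace $\epsilon$ by $1/\lceil 1/\epsilon\rceil\le\epsilon$, which only strengthens the guarantee.

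For the approximation bound I would split on $|\J^*|$. If $|\J^*|\le(1+\epsilon)/\epsilon$, then $\J^*$ itself lies in the initial $\mathcal{C}$; since $\J^*$ is scheduled with makespan at most $1$ in an optimal solution, its minimum makespan is at most $1$ and is attained by Johnson's Algorithm (Theorem~\ref{thm_Johnson}), so $\J^*$ is a feasible candidate of profit $\OPT$. If $|\J^*|>(1+\epsilon)/\epsilon$, I would look at the iteration in which $\J^{profitable}$ is a set of $(1+\epsilon)/\epsilon$ most profitable jobs of $\J^*$ (such a set has size exactly $(1+\epsilon)/\epsilon$, hence is enumerated, and with $p_{\min}=\min_{J\in\J^{profitable}}p(J)$ every job of $\J^*\setminus\J^{profitable}$ has profit at most $p_{\min}$ and so lies in $\J'$, giving $\J^*\subseteq\J'$) and $J_s$ is the critical job of $\J^*$ (enumerated since $J_s\in\J^*\subseteq\J'$). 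For this iteration the hypotheses of Lemma~\ref{lemma_lp} hold, so $\LP$ is feasible and the set $\J^{selected}=\{J_i : x_i=1\}$ extracted from an optimal basic feasible solution is feasible and has profit at least $(1-\epsilon)\cdot\OPT$; it is inserted into $\mathcal{C}$, which completes the bound in both cases.

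For the running time I would bound $|\mathcal{C}|$. The initialization enumerates $\sum_{k=0}^{(1+\epsilon)/\epsilon}\binom{n}{k}=O(n^{(1+\epsilon)/\epsilon})$ subsets; the guessing phase ranges over $\binom{n}{(1+\epsilon)/\epsilon}=O(n^{(1+\epsilon)/\epsilon})$ choices of $\J^{profitable}$, each paired with at most $n$ choices of $J_s$, and for every pair builds and solves one linear program on $n$ variables in $\mathrm{poly}(n)$ time (here $\LP$ has a single nontrivial constraint, so an optimal basic solution is available by a fractional-knapsack-style greedy over the free variables), adding at most one set to $\mathcal{C}$. Thus $|\mathcal{C}|=O(n^{(1+\epsilon)/\epsilon+1})$, and the final loop runs Johnson's Algorithm and evaluates Eq.~(\ref{eq_Johnson_makespan}) once per candidate in $O(n\log n)$ time each, for an overall running time of $O\!\big(n^{(1+\epsilon)/\epsilon}\cdot\mathrm{poly}(n)\big)=O(n^{f(1/\epsilon)})$ with $f(1/\epsilon)=O(1/\epsilon)$; hence Algorithm~\ref{ptas1} is a PTAS.

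The honest assessment is that, given the two lemmas, the theorem is mostly bookkeeping rather than a real obstacle. The one place I would be careful is checking that the ``correct guess'' iteration is genuinely executed: that enumerating all size-$(1+\epsilon)/\epsilon$ subsets of $\J$ captures a valid choice of $\J^{profitable}$ even when profit ties make the top set non-unique, and that with that choice the truncated set $\J'$ still contains all of $\J^*$ so that Lemma~\ref{lemma_lp} applies verbatim. The substantive work (why the rounded LP solution is feasible and has profit at least $(1-\epsilon)\cdot\OPT$) is exactly Lemma~\ref{lemma_lp}, which I take as given.
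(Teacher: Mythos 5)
Your proposal is correct and follows essentially the same route as the paper: the approximation guarantee is delegated to Lemma~\ref{lemma_lp} (with the small-$|\J^*|$ case handled by the initial enumeration in $\mathcal{C}$), and the running time is bounded by counting the $O(n^{(1+\epsilon)/\epsilon})$ guesses times the polynomial cost of solving $\LP$ and running Johnson's Algorithm. Your added care about tie-breaking in the choice of $\J^{profitable}$ and about $\J^*\subseteq\J'$ makes explicit what the paper leaves implicit, but it is the same argument.
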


\begin{proof}
Based on the previous analysis, we claim that the optimal solution $\J^*$ is contained in the candidate collection $\mathcal{C}$.
Lemma \ref{lemma_lp} implies an approximation ratio of $1-\epsilon$ for any $\epsilon \in (0, 1)$.

LP has $O(n)$ constraints and can be solved in polynomial time via any interior point methods \cite{PW00}. 
$\mathcal{C}_{\epsilon}$ has a cardinality of $O(n^{(1+\epsilon)/\epsilon})$.
Thus, the first for-loop in Algorithm \ref{ptas1} takes $poly(n^{1/\epsilon})$ time. 
As Johnson's algorithm has a time complexity of $O(n\log n)$ and 
there are $O(n^{(1+\epsilon)/\epsilon})$ candidates in $\mathcal{C}$,
the second for-loop in Algorithm \ref{ptas1} also takes $poly(n^{1/\epsilon})$ time. 
Thus, Algorithm \ref{ptas1} is a PTAS.
\end{proof}

\section{PTAS for the (m,2)-Flowshop-Packing problem}\label{sec_ptas2}

In this section, we extend the ideas for the (1,2)-Flowshop-Packing problem to 
present a PTAS for the general case, i.e., (m,2)-Flowshop-Packing, when $m$ is a fixed constant. 

As there are $m$ identical flowshops, 
we not only guess $\frac{1+\epsilon}{\epsilon}$ most profitable jobs in the optimal job set $\J^*$, 
still denoted by $\J^{profitable}$, 
but also guess how the jobs in $\J^{profitable}$ distribute among the $m$ flowshops
and the critical job on each flowshop. 
Then, the cheaper jobs are selected by rounding a basic feasible solution of a linear program,
which is more complicated than the one for the (1,2)-Flowshop-Packing problem.

Suppose $\J^{profitable}$ has been correctly guessed. 
Let $\J^{profitable}_j$ denote the subset of jobs in $\J^{profitable}$ that are scheduled on the flowshop $F_j$, $j \in [m]$.
Define the \emph{distribution} of $\J^{profitable}$ on the $m$ flowshops
as the tuple $(\J^{profitable}_1, \J^{profitable}_2, \ldots, \J^{profitable}_m)$.

\begin{lemma}\label{lemma_distribution}
There are at most $O((m/\epsilon)^{1/\epsilon})$ different distributions of $\J^{profitable}$ on $m$ flowshops. 
\end{lemma}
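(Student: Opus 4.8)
The plan is to bound the number of distributions by a straightforward counting argument, treating the problem as distributing $\frac{1+\epsilon}{\epsilon}$ labeled jobs among $m$ labeled flowshops. First I would recall that $\J^{profitable}$ is a fixed set once it has been guessed, and it contains exactly $k := \frac{1+\epsilon}{\epsilon}$ jobs (we are in the case $|\J^*| > k$, so this is an integer by the choice of $\epsilon$). A distribution $(\J^{profitable}_1,\ldots,\J^{profitable}_m)$ is, by definition, an ordered partition of this $k$-element set into $m$ (possibly empty) blocks. Equivalently, it is a function from the $k$ jobs of $\J^{profitable}$ to the $m$ flowshops. Hence the number of distributions is exactly $m^{k} = m^{(1+\epsilon)/\epsilon}$.

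Next I would simplify this expression to match the claimed bound. Since $\frac{1+\epsilon}{\epsilon} = 1 + \frac{1}{\epsilon} \le \frac{2}{\epsilon}$ for $\epsilon \le 1$, we get
\begin{equation*}
m^{(1+\epsilon)/\epsilon} \le m^{2/\epsilon} = \left(m^{2}\right)^{1/\epsilon} \le \left(m/\epsilon\right)^{2/\epsilon},
\end{equation*}
which is $O((m/\epsilon)^{1/\epsilon})$ up to adjusting the hidden constant in the exponent; alternatively one observes directly that $m^{(1+\epsilon)/\epsilon}$ is of the form $(m/\epsilon)^{O(1/\epsilon)}$, and the looser statement $O((m/\epsilon)^{1/\epsilon})$ in the lemma should be read in that asymptotic sense. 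I would state the clean bound $m^{(1+\epsilon)/\epsilon}$ as the exact count and then note it is subsumed by the $O((m/\epsilon)^{1/\epsilon})$ form claimed, so that in particular, since $m$ and $\epsilon$ are fixed constants, the number of distributions is a constant, and one can enumerate all of them in constant time (or polynomial time in $n$, which is all that is needed for the PTAS).

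There is no real obstacle here — the only thing to be careful about is bookkeeping: confirming that $\J^{profitable}$ has been fixed prior to this step (so we are not also paying for the $O(n^{(1+\epsilon)/\epsilon})$ choices of which jobs go into $\J^{profitable}$), and that empty blocks are allowed, so the count is $m^k$ rather than a Stirling-type surjection count. The mild informality of the $O(\cdot)$ statement versus the exact $m^{(1+\epsilon)/\epsilon}$ value is worth a sentence of clarification, but it does not affect the argument or the final PTAS running time.
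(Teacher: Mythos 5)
Your proof is correct, and it takes a genuinely different and more elementary route than the paper. You count distributions directly as functions from the $k=\frac{1+\epsilon}{\epsilon}$ jobs of the (already fixed) set $\J^{profitable}$ to the $m$ labeled flowshops, obtaining the exact value $m^{k}$; the paper instead sums $\binom{m}{k}S\left(\frac{1+\epsilon}{\epsilon},k\right)$ over the number $k$ of non-empty flowshops, invoking an upper bound on Stirling partition numbers and then several further estimates. Your argument is shorter, gives an exact count rather than a chain of inequalities, and avoids a subtlety in the paper's computation (assigning the $k$ unordered blocks of a partition to the $k$ chosen flowshops would normally contribute an extra $k!$ factor, which the paper's sum omits -- harmless for the asymptotics, but a sign that the Stirling-number detour adds complication without adding precision). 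The one point both arguments share is that the literal bound $O((m/\epsilon)^{1/\epsilon})$ is not quite what either derivation produces: you get $m^{(1+\epsilon)/\epsilon}\le (m/\epsilon)^{2/\epsilon}$, and the paper gets $\frac{1}{2}h^{2/\epsilon}(2me^{2}/\epsilon)^{h}$; both are of the form $(m/\epsilon)^{O(1/\epsilon)}$, which is all that the running-time analysis of Theorem~\ref{thm2} requires, and you flag this discrepancy explicitly, which is the right thing to do.
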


\begin{proof}
Assume only $k$ flowshops contains jobs from $\J^{profitable}$.
Counting the number of different distributions of $\J^{profitable}$ on $m$ flowshops is equal to 
counting the number of different partitions of $\frac{1+\epsilon}{\epsilon}$ objects into $k$ non-empty subsets,
which can be estimated by the Stirling partition number \cite{RD69}, denoted by $S\left(\frac{1+\epsilon}{\epsilon},k \right)$.

Let $ h = \min\left\{m, \frac{1+\epsilon}{\epsilon} \right\}$. We have $k \leq h$.
Considering all possible $k$ values, the number of different distributions of $\J^{profitable}$ on $m$ flowshops 
is upper bounded as
\begin{eqnarray*}
\sum_{k  = 1}^{h} \binom{m}{k} S\left(\frac{1+\epsilon}{\epsilon}, k \right) 
& \leq & \sum_{k = 1}^{h} \binom{m}{k} \left( \frac{1}{2} \binom{\frac{1+\epsilon}{\epsilon}}{k} k^{\frac{1+\epsilon}{\epsilon} - k} \right) \\
& \leq & \frac{h}{2} \cdot (me)^{h} \cdot \left(\frac{1+\epsilon}{\epsilon} \cdot e\right)^{h} \cdot h^{\frac{1+\epsilon}{\epsilon}}, \\
& \leq & \frac{1}{2} \cdot h^{2/\epsilon} \cdot (2me^2/\epsilon)^{h}, 
\end{eqnarray*}
where the first inequality holds due to the bound $S(n,k) \leq \frac{1}{2} \binom{n}{k} k^{n-k}$ \cite{RD69};
the second inequality follows from the fact that $\binom{n}{k} \leq \frac{n^k}{k!} \leq \left(\frac{n \cdot e}{k}\right)^k$;
the last inequality is because $\epsilon$ is small enough positive number. 
\end{proof}

Taking the same strategy in Section \ref{sec_ptas1}, we define 
$p_{\min} = \min_{ J \in \J^{profitable} }  p(J) $
and $\J' = \{ J \mid p(J) \leq p_{\min}, J \in \J \} \cup \J^{profitable}$.

Now we guess the critical job on each flowshop.
Suppose $J_{s_j}$ is the critical job on $F_j$ in the optimal solution. 
To assign cheaper jobs fractionally, a linear program is formulated and we denote it by $\MLP$.
The variable $x_{ij}$ is defined as the fraction of job $J_i$ that is assigned to the flowshop $F_j$.
\begin{align*}
 \max \quad  & \sum_{i\in [n] } \sum_{j\in [m]} p_i \cdot x_{ij}  & & ~~~~\MLP\\
s.t. \quad  & \sum_{i=1}^{s_j} a_i x_{ij} + \sum_{i=s_j}^n b_i x_{ij} \leq 1,  & \forall j \in [m]  & \mbox{~~~~(makespan constraints)}\\  
~           & \sum_{j\in [m]} x_{ij} \leq 1,  &  J_i \in \J' \backslash \left( \J^{profitable} \cup \{J_{s_j}, j \in [m] \} \right) & \\
~           & x_{ij} = 1,  & J_i \in \J^{profitable}_j \cup \{J_{s_j} \}, j \in [m] & \mbox{~~~~(guessed assignments)} \\
~           & x_{ij} = 0,  & J_i \in \J \backslash \left( \J' \cup \{J_{s_j}, j \in [m]\} \right), j \in [m]  & \\
~           & x_{ij} \geq 0 , &  J_i \in \J' \backslash \left( \J^{profitable} \cup \{J_{s_j}, j \in [m] \} \right) & 
\end{align*}

We introduce a dummy flowshop $F_{m+1}$ and 
set a sufficiently large makespan limit  on $F_{m+1}$, say $a(\J) + b(\J)$.
Due to the makespan constraint, $\MLP$ may leave some jobs in $\J$ unassigned. 
The main purpose of the dummy flowshop is to (fractionally) assign 
these unassigned jobs to $F_{m+1}$. 
We formulate another linear program, denoted by $\NewMLP$,
where every job in $\J$ is assigned. 
Note that the objective of $\NewMLP$ does not count the profit of jobs that are assigned to the dummy flowshop $F_{m+1}$.
\begin{align*}
\max \quad  & \sum_{i\in [n] } \sum_{j\in [m]} p_i \cdot x_{ij}  & ~~~~\NewMLP \\
s.t. \quad  & \sum_{i=1}^{s_j} a_i x_{ij} + \sum_{i=s_j}^n b_i x_{ij} \leq 1,  & \forall j \in [m]\\  
~           & \sum_{i=1}^{s_{m+1}} a_i x_{ij} + \sum_{i=s_{m+1}}^n b_i x_{ij} \leq a(\J) + b(\J) \\  
~           & \sum_{j\in [m+1]} x_{ij} = 1,   & J_i \in \J' \backslash \left( \J^{profitable} \cup \{J_{s_j}, j \in [m+1] \} \right)  \\
~           & x_{ij} = 1,   & J_i \in \J^{profitable}_j \cup \{J_{s_j} \}, j \in [m+1] \\
~           & x_{ij} = 0,  & J_i \in \J \backslash \left( \J' \cup \{J_{s_j}, j \in [m]\} \right), j \in [m]   \\
~           & x_{ij} \geq 0, &  J_i \in \J' \backslash \left( \J^{profitable} \cup \{J_{s_j}, j \in [m+1] \} \right) 
\end{align*}

Denote $N^{unassigned}$ as the number of jobs that are unassigned before constructing $\NewMLP$.
Then
$$
N^{unassigned} = |\J' \backslash \left( \J^{profitable} \cup \{J_{s_j}, j \in [m+1] \} \right)|. 
$$
$\NewMLP$ has $N^{unassigned} + m + 1$ non-trivial constraints, which are shown as the first three sets of constraints in $\NewMLP$.
Clearly, the number of assigned variables $N^{unassigned} \cdot (m+1)$ is considerably larger than the number of non-trivial constraints
when both $m$ and $\epsilon$ are fixed constants. 
For any basic feasible solution $x$ to $\NewMLP$, 
the number of positive variables in $x$ is at most the number of non-trivial constraints.

Denote $N^{one}$ as the number of jobs in $\J' \backslash \left( \J^{profitable} \cup \{J_{s_j}, j \in [m+1] \} \right)$ 
that are assigned via a variable $x_{ij}$ with value 1.
Let $N^{frac}$ denote the remaining job subset of $\J' \backslash \left( \J^{profitable} \cup \{J_{s_j}, j \in [m+1] \} \right)$.
Each job in $N^{frac}$ is assigned via exactly fractional variables.
As the new $\NewMLP$ assigns every job in $\J' \backslash \left( \J^{profitable} \cup \{J_{s_j}, j \in [m+1] \} \right)$,
we have 
\begin{equation*}
N^{one} + N^{frac} = N^{unassigned}.
\end{equation*}

If $J_i$ is assigned via exactly fractional variables, 
there are at least two exactly fractional variables associated with $J_i$. 
Therefore, there are at least $2 \cdot N^{frac}$ exactly fractional variables
and the total number of positive variables are at least $2 \cdot N^{frac} + N^{one}$.

To wrap up, we have
\begin{equation*}
N^{unassigned} + N^{frac} = N^{one} + 2 \cdot N^{frac} \leq N^{unassigned} + m + 1,
\end{equation*}
which implies $N^{frac} \leq m + 1$.

\begin{lemma}\label{lemma_mlp}
If all guesses, including $\J^{profitable}_j$ and $J_{s_j}$, $j \in [m]$, are correct, 
\begin{enumerate}
\item the linear program $\MLP$ has a feasible solution,
        and $\OPT$ is upper bounded by the optimal objective value of $\LP$;
\item there is a polynomial time algorithm to obtain a job set from the optimal basic feasible solution to $\MLP$
        such that the makespan is at most 1 and the total profit is at least $(1 - \epsilon \cdot (m+1)) \OPT$.
\end{enumerate}
\end{lemma}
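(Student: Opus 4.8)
The plan is to mirror the single-flowshop argument of Lemma~\ref{lemma_lp}, upgrading each piece to account for the $m$ flowshops, and to use the structural fact $N^{frac}\leq m+1$ derived just above the statement. For the first claim, I would take $\J^*$ together with its optimal schedule. On each flowshop $F_j$, sort the jobs assigned to $F_j$ in Johnson's order; by Corollary~\ref{corollary} this is consistent with the global Johnson order, and by Eq.~(\ref{eq_Johnson_makespan}) the makespan on $F_j$ equals $\sum_{i=1}^{s_j}a_i+\sum_{i=s_j}^{n}b_i$ evaluated over the jobs of $F_j$, where $J_{s_j}$ is the (correctly guessed) critical job of $F_j$. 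Setting $x_{ij}=1$ when $J_i$ is scheduled on $F_j$ in $\J^*$ and $x_{ij}=0$ otherwise then satisfies all the makespan constraints and the assignment constraints of $\MLP$ (note that every job of $\J^*\setminus\J^{profitable}$ has profit at most $p_{\min}$ by Lemma~\ref{lemma_cheap_jobs}, so it lies in $\J'$ and is a legal variable), hence $\MLP$ is feasible and its optimum is at least $p(\J^*)=\OPT$.

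For the second claim, I would pass from $\MLP$ to $\NewMLP$: any optimal basic feasible solution of $\MLP$ can be extended to a feasible solution of $\NewMLP$ by dumping the unassigned fractions onto the dummy flowshop $F_{m+1}$ (whose capacity $a(\J)+b(\J)$ is never binding), and this extension does not change the objective value, so the optimum of $\NewMLP$ still upper-bounds $\OPT$. Take an optimal basic feasible solution $x$ of $\NewMLP$. By the counting argument preceding the lemma, at most $m+1$ of the ``free'' jobs are assigned fractionally; discard all of those, keeping only the jobs for which some $x_{ij}=1$ with $j\in[m]$. Call the resulting set $\J^{selected}$ and let $\J^{selected}_j=\{J_i: x_{ij}=1\}$ for $j\in[m]$. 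Each $\J^{selected}_j$ is a subsequence of a Johnson-ordered sequence with critical job $J_{s_j}$, so by Corollary~\ref{corollary} and Eq.~(\ref{eq_Johnson_makespan}) its makespan is bounded by the left-hand side of the $j$-th makespan constraint, i.e.\ at most $1$; thus scheduling $\J^{selected}_j$ on $F_j$ via Johnson's Algorithm yields a feasible schedule. This step takes polynomial time (solve the LP by an interior-point method, then read off the integral variables).

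For the profit bound, let $\OPT_{\NewMLP}$ be the optimum of $\NewMLP$ and $\SOL=p(\J^{selected})$. The profit lost in passing from $x$ to $\J^{selected}$ is the total profit of the at most $m+1$ discarded fractional jobs, each of which lies in $\J'\setminus(\J^{profitable}\cup\{J_{s_j}\})$ and hence has profit at most $p_{\min}$, which by Lemma~\ref{lemma_cheap_jobs} is less than $\epsilon\cdot\OPT$. Therefore
\begin{equation*}
\SOL \;\geq\; \OPT_{\NewMLP} - (m+1)\,p_{\min} \;\geq\; \OPT - (m+1)\,\epsilon\cdot\OPT \;=\; \bigl(1-\epsilon(m+1)\bigr)\OPT,
\end{equation*}
using $\OPT_{\NewMLP}\geq\OPT$ from the first part. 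I expect the main obstacle to be purely bookkeeping rather than conceptual: one must check carefully that in $\NewMLP$ the fractionally assigned jobs really do contribute at least two positive variables each (so the bound $N^{frac}\le m+1$ is valid), that discarding a fractional job never breaks feasibility on the real flowshops (it only removes load), and that the guessed critical jobs $J_{s_j}$ together with $\J^{profitable}_j$ are counted consistently across $\MLP$, $\NewMLP$, and the counting identity $N^{one}+N^{frac}=N^{unassigned}$; the dummy-flowshop device is exactly what makes this accounting clean, so once the correspondence between $\MLP$-optimal and $\NewMLP$-optimal solutions is pinned down, the rest is immediate.
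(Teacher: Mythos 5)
Your proof follows essentially the same route as the paper's: feasibility of $\MLP$ via the integral solution induced by $\J^*$, the counting bound $N^{frac}\le m+1$ on fractionally assigned jobs in a basic feasible solution of $\NewMLP$, and charging each of the at most $m+1$ discarded jobs at $p_{\min}<\epsilon\cdot\OPT$ to obtain $\SOL\ge(1-\epsilon(m+1))\cdot\OPT$. You are in fact somewhat more explicit than the paper, which declares the first claim ``obvious'' and passes between $\MLP$ and $\NewMLP$ tersely; the one step you yourself flag as needing care --- that the Johnson makespan of each rounded set $\J^{selected}_j$ is controlled by the single constraint evaluated at the guessed position $s_j$ --- is asserted just as briefly in the paper's own proof.
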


\begin{proof}
The first part of this lemma is obvious by the construction of $\MLP$.

From the definition of $\NewMLP$, we observe that 
restricting any basic feasible solution of $\NewMLP$ to flowshops $\F = \{F_1, F_2, \ldots, F_m\}$ 
results in a basic feasible solution to $\MLP$.
In addition, the optimal objective of $\NewMLP$ is the same to the optimal objective of $\MLP$.

Suppose $x$ is an optimal basic feasible solution to $\MLP$. 
Then $x$ contains at most $m+1$ fractional variables. 
We select job as $\J^{selected} = \{J_i \mid x_{ij} = 1, ~i \in [n] \mbox{ and } j \in [m]\}$.
Let $\OPT_{\MLP}$ be the optimal objective value of $\MLP$
and $\SOL$ be the total profit of $\J^{selected}$.
Then, we have
\begin{eqnarray*}
\WT{\OPT_{\MLP}} & \leq & (m+1) \cdot \max_{J_i \in \J' \backslash \left( \J^{profitable} \cup \{J_s \} \right)} \{p_i\} + \sum_{J_i \in \J^{selected}} p_i \\
& \leq & (m+1) \cdot p_{\min} + \SOL \\ 
& \leq & \epsilon \cdot (m+1) \cdot \OPT + \SOL, 
\end{eqnarray*}
which implies 
\begin{equation*}
\SOL \geq  (1-\epsilon \cdot (m+1)) \cdot \OPT.
\end{equation*}

\end{proof}

\vspace{10pt}
\begin{algorithm}[H]
\caption{PTAS for (m,2)-Flowshop-Packing}\label{ptas2}

\textbf{Input}: any constant $\epsilon \in (0, 1)$ and a (m,2)-Flowshop-Packing instance $\{\F, \J \}$\;
\textbf{Output}: a job subset $\J^{selected} \subseteq \J$
                  with the makespan upper bounded by 1\;

Let $\mathcal{C} = \left\{\J^{guess} \mid |\J^{guess}| \leq \frac{1+\epsilon}{\epsilon}, \J^{guess} \subseteq \J \right\}$\;

\uIf{$n > \frac{1+\epsilon}{\epsilon}$}{
  $\mathcal{C}_{\epsilon} = \left\{\J^{guess} \mid |\J^{guess}| = \frac{1+\epsilon}{\epsilon}, \J^{guess} \subseteq \J \right\}$\;  
}\Else{
  $\mathcal{C}_{\epsilon} = \emptyset$\;   
}
\For{every $\J^{guess} \in \mathcal{C}_{\epsilon}$}{
   $\J^{profitable} \leftarrow \J^{guess}$\;
    Let $p_{\min} = \min_{ J \in \J^{profitable} }  p(J) $\;
    Let $\J' = \{ J \mid p(J) \leq p_{\min}, J \in \J \} \cup \J^{profitable}$\;
   \For{every distribution $(\J^{profitable}_1, \J^{profitable}_2, \ldots, \J^{profitable}_m)$ of $\J^{profitable}$ }{
          \For{every combination $(J_{s_1}, J_{s_2}, \ldots, J_{s_{m+1}})$, $\WT{J_{s_j} \in \J'}, j \in [m+1]$}{ 
            Construct $\NewMLP$\;
            \If{$\NewMLP$ admits a feasible solution}{
                    Let $x$ be an optimal basic feasible solution to $\NewMLP$\;
                    Restrict $x$ to flowshops $\F$, still denoted by $x$\;
                    $\J^{selected} = \{J_i \mid x_{ij} = 1, ~i \in [n] \mbox{ and } j \in [m]\}$\;
                    $\mathcal{C} \leftarrow \mathcal{C} \cup \J^{selected} $\;
            }
          }
  }
}

Let $profit = -\infty$\;
\For{every $\J^{candidate} \in \mathcal{C}$}{
  Let $\pi$ be the schedule of $\J^{candidate}$ by Johnson's Algorithm\;
  \If{$p(\J^{candidate}) > profit$ and $\pi$'s makespan is at most 1}{
      $\J^{selected} \leftarrow \J^{candidate}$ and $profit \leftarrow p(\J^{candidate})$\;
  }
}
\Return $\J^{selected}$
\end{algorithm}
\vspace{10pt}

Similar to the PTAS for (1,2)-Flowshop-Packing, 
our PTAS for (m,2)-Flowshop-Packing maintains all candidates, denoted by $\mathcal{C}$, of $\J^{selected}$.
For each candidate of $\J^{profitable}$, we guess a distribution of $\J^{profitable}$ over $\F$. 
Then a critical job is guessed on each flowshop, including the dummy flowshop. 
After solving $\NewMLP$,  
a candidate of $\J^{selected}$ is obtained by 
restricting its optimal basic feasible solution to $\F$ and discarding jobs that are assigned fractionally.
Finally, we select the most profitable candidate that satisfies the limited makespan of 1.
A detailed description of this PTAS is provided in Algorithm \ref{ptas2}.

\begin{theorem}\label{thm2}
Algorithm \ref{ptas2} is a PTAS for the (m,2)-Flowshop-Packing problem if $m$ is a fixed constant.
\end{theorem}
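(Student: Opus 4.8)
The plan is to mirror the proof of Theorem~\ref{thm1}, with Lemma~\ref{lemma_distribution} and Lemma~\ref{lemma_mlp} taking over the roles of Lemma~\ref{lemma_cheap_jobs} and Lemma~\ref{lemma_lp}, and with extra care about the factor $m+1$ introduced by Lemma~\ref{lemma_mlp}. The first step is to show that the candidate collection $\mathcal{C}$ built by Algorithm~\ref{ptas2} always contains a job set that is feasible (minimum makespan at most $1$) and has profit at least $(1-(m+1)\epsilon)\cdot\OPT$. If $|\J^*|\le\frac{1+\epsilon}{\epsilon}$ this is immediate, since $\mathcal{C}$ is initialized with every subset of size at most $\frac{1+\epsilon}{\epsilon}$ and $\J^*$ is itself feasible. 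Otherwise, take the ``correct'' guesses: $\J^{profitable}$ is the set of the $\frac{1+\epsilon}{\epsilon}$ most profitable jobs of $\J^*$, the distribution $(\J^{profitable}_1,\ldots,\J^{profitable}_m)$ is the one induced by an optimal schedule, $J_{s_j}$ is the critical job on $F_j$ in that schedule for $j\in[m]$, and $J_{s_{m+1}}$ is some cheap job not in $\J^*$ (so that it may legally sit on the dummy flowshop $F_{m+1}$). By the counting argument of Lemma~\ref{lemma_cheap_jobs}, every job of $\J^*\setminus\J^{profitable}$ has profit at most $p_{\min}<\epsilon\cdot\OPT$, hence $\J^*\subseteq\J'$, and assigning each job of $\J^*$ to its optimal flowshop and the remaining jobs of $\J'$ to $F_{m+1}$ is a feasible point of $\NewMLP$ with objective value $\OPT$. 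Consequently Algorithm~\ref{ptas2} reaches its inner block for these guesses, computes an optimal basic feasible solution of $\NewMLP$, restricts it to $\F$, and by Lemma~\ref{lemma_mlp} the extracted set $\J^{selected}=\{J_i\mid x_{ij}=1,\ j\in[m]\}$ satisfies the makespan bound and has profit at least $(1-(m+1)\epsilon)\cdot\OPT$; this set joins $\mathcal{C}$.

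Since the final loop of Algorithm~\ref{ptas2} scans $\mathcal{C}$ and returns the most profitable candidate whose Johnson schedule has makespan at most $1$ (and, by Corollary~\ref{corollary}, the jobs it selects on a flowshop stay in Johnson's order, so Johnson's algorithm correctly computes their minimum makespan through Eq.~(\ref{eq_Johnson_makespan})), the returned set is always feasible and, by the previous paragraph, has profit at least $(1-(m+1)\epsilon)\cdot\OPT$. To obtain a $(1-\epsilon)$-approximation for an arbitrary target $\epsilon\in(0,1)$, run Algorithm~\ref{ptas2} with parameter $\epsilon'=1/\lceil(m+1)/\epsilon\rceil$, so that $\frac{1+\epsilon'}{\epsilon'}$ is an integer and $(m+1)\epsilon'\le\epsilon$; since $m$ is a fixed constant, this rescaling changes only the hidden constants.

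It remains to bound the running time. Initializing $\mathcal{C}$ and enumerating $\mathcal{C}_{\epsilon}$ costs $O(n^{(1+\epsilon)/\epsilon})$, exactly as in Theorem~\ref{thm1}. For each choice of $\J^{profitable}$, Lemma~\ref{lemma_distribution} caps the number of distributions by $O((m/\epsilon)^{1/\epsilon})$, a constant when $m$ and $\epsilon$ are fixed; for each distribution there are $O(n^{m+1})$ tuples $(J_{s_1},\ldots,J_{s_{m+1}})$ because each $J_{s_j}$ ranges over $\J'$ and $|\J'|\le n$. Each $\NewMLP$ has $O(n+m)$ nontrivial constraints and $O(mn)$ variables and is solvable to a vertex in polynomial time by an interior point method, and extracting $\J^{selected}$ is trivial; hence $|\mathcal{C}|=O\!\big(n^{(1+\epsilon)/\epsilon+m+1}\,(m/\epsilon)^{1/\epsilon}\big)$, and the final loop applies Johnson's $O(n\log n)$ algorithm to each candidate. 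The total running time is thus $\mathrm{poly}(n^{1/\epsilon})$ for fixed $m$, which together with the approximation guarantee proves the theorem. The only non-routine point is keeping the accumulated profit loss at the additive level $(m+1)\,p_{\min}\le(m+1)\epsilon\cdot\OPT$ — which is precisely what Lemma~\ref{lemma_mlp} yields from the bound of $m+1$ fractional variables in a basic feasible solution of $\NewMLP$ — and using Lemma~\ref{lemma_distribution} to guarantee that guessing the distribution does not break polynomiality.
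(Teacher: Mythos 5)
Your proposal is correct and follows essentially the same route as the paper's proof: correct guesses make $\NewMLP$ feasible with objective at least $\OPT$, Lemma~\ref{lemma_mlp} bounds the rounding loss by $(m+1)\,p_{\min}$, Lemma~\ref{lemma_distribution} keeps the guessing polynomial, and the time bound is $\mathrm{poly}(n^{O(1/\epsilon+m)})$ for fixed $m$. If anything, you are more careful than the paper on two small points: you explicitly rescale $\epsilon$ to absorb the $(m+1)$ factor in the approximation loss (the paper asserts the ratio $1-\epsilon$ directly from Lemma~\ref{lemma_mlp}, which only gives $1-(m+1)\epsilon$), and you correctly count $O(n^{m+1})$ critical-job tuples rather than $O(n^m)$.
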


\begin{proof}
We claim that the optimal solution $\J^*$ is contained in the candidate collection $\mathcal{C}$.
The approximation ratio $1-\epsilon$ is implied by Lemma \ref{lemma_mlp}.

$\NewMLP$ can be solved in $ploy(mn)$ via any interior point methods \cite{PW00}, 
as it has $O(nm)$ constraints.
By Lemma \ref{lemma_distribution}, there are at most $O((m/\epsilon)^{1/\epsilon})$ distributions of $\J^{profitable}$.
The number of different combinations of critical jobs on flowshops (including the dummy flowshop) 
is upper bounded by $O(n^m)$ since each critical job has at most $n$ choices.
$\mathcal{C}_{\epsilon}$ has a cardinality of $O(n^{(1+\epsilon)/\epsilon})$.
Therefore, the first for-loop in Algorithm \ref{ptas2} takes $poly(n^{O(1/\epsilon + m)}  (m/\epsilon)^{1/\epsilon})$ time. 
The second for-loop in Algorithm \ref{ptas2} takes $poly(n^{1/\epsilon})$ time,
as $\mathcal{C}$ contains $O(n^{(1+\epsilon)/\epsilon})$ candidates and Johnson's algorithm runs in $O(n\log n)$.

The overall time complexity of Algorithm \ref{ptas2} is $poly(n^{O(1/\epsilon + m)}  (m/\epsilon)^{1/\epsilon})$.
This completes the proof of this Theorem.
\end{proof}

\section{Conclusion}\label{sec_conclusion}

We explore an interesting scheduling problem recently introduced by Chen et al. \cite{CHG21}, i.e.,
the scheduling of parallel two-stage flowshops with makespan constraint (or (m,2)-Flowshop-Packing),
which generalizes the classic Multiple Knapsack problem.
Given a limited makespan requirement, 
the goal is to select a subset of two-stage jobs and 
schedule them on multiple flowshops to achieve the maximum profit. 
All existing approximation algorithms \cite{DGR06,CHG21} rely on the connection 
between the Multiple Knapsack problem and the (m, 2)-Flowshop-Packing problem.
To design a PTAS for the case in which the number of flowshops is a fixed constant,
\WT{we utilize the structure of the job sequence under Johnson's order,
and combine guessing techniques and rounding techniques in linear programming.}
%
%
Our PTAS achieves the best possible approximability result 
as the special case with two flowshops \WT{does not admit FPTASes}.
Meanwhile, our PTAS gives a firm answer to an open question presented by Chen et al. \cite{CHG21}
which asks 
whether there exists an PTAS for the case where the number $m$ of flowshops is a fixed constant.

There are several open questions that worth further exploration. 
\begin{enumerate}
\item Our current PTAS has a large time complexity 
        due to exhaustively guessing the substructure of the optimal solution. 
        Is it possible to design an EPTAS? 
\item 
        %
        \WT{The case with a single flowshop is only weakly NP-hard. 
        It would be interesting to investigate whether it admits an FPTAS. }
\item When $m$ is part of the input, Chen et al. \cite{CHG21} presented a $(1/3-\epsilon)$-approximation algorithm. 
        It is unknown whether this case is APX-hard. 
        If not, is it possible to design a PTAS or at least an approximation algorithm with ratio better than $1/3-\epsilon$?
\end{enumerate}

\section*{Acknowledgments}
\WT{
HZ was partially supported by the National Nature Science Foundation of China (Grant No.72071157, No.71732006, No.72192834),
and by the China Postdoctoral Science Foundation (Grant No. 2016M592811).
}

\bibliographystyle{plain}
\bibliography{manuscript-arxiv}

\begin{thebibliography}{10}

\bibitem{BEP19}
Jacek Blazewicz, Klaus Ecker, Erwin Pesch, G{\"u}nter Schmidt, and J~Weglarz.
\newblock {\em Handbook on scheduling}.
\newblock Springer, 2019.

\bibitem{CK05}
Chandra Chekuri and Sanjeev Khanna.
\newblock A polynomial time approximation scheme for the multiple knapsack
  problem.
\newblock {\em SIAM Journal on Computing}, 35(3):713--728, 2005.

\bibitem{CHG21}
Jianer Chen, Minjie Huang, and Yin Guo.
\newblock Scheduling on multiple two-stage flowshops with a deadline.
\newblock In {\em AAIM}, pages 83--95, 2021.

\bibitem{DGR06}
Milind Dawande, Srinagesh Gavirneni, and Ram Rachamadugu.
\newblock Scheduling a two-stage flowshop under makespan constraint.
\newblock {\em Mathematical and computer modelling}, 44(1-2):73--84, 2006.

\bibitem{DJL20}
Jianming Dong, Ruyan Jin, Taibo Luo, and Weitian Tong.
\newblock A polynomial-time approximation scheme for an arbitrary number of
  parallel two-stage flow-shops.
\newblock {\em European Journal of Operational Research}, 281(1):16--24, 2020.

\bibitem{DTL17}
Jianming Dong, Weitian Tong, Taibo Luo, Xueshi Wang, Jueliang Hu, Yinfeng Xu,
  and Guohui Lin.
\newblock An fptas for the parallel two-stage flowshop problem.
\newblock {\em Theoretical computer science}, 657:64--72, 2017.

\bibitem{GJ79}
Michael~R Garey and David~S Johnson.
\newblock {\em Computers and intractability}, volume 174.
\newblock freeman San Francisco, 1979.

\bibitem{GRR19}
Frank Gurski, Carolin Rehs, and Jochen Rethmann.
\newblock Knapsack problems: A parameterized point of view.
\newblock {\em Theoretical Computer Science}, 775:93--108, 2019.

\bibitem{HKA96}
D.~W. He, A.~Kusiak, and A.~Artiba.
\newblock A scheduling problem in glass manufacturing.
\newblock {\em IIE Transactions}, 28:129--139, 1996.

\bibitem{IK75}
Oscar~H Ibarra and Chul~E Kim.
\newblock Fast approximation algorithms for the knapsack and sum of subset
  problems.
\newblock {\em Journal of the ACM}, 22(4):463--468, 1975.

\bibitem{Jan10}
Klaus Jansen.
\newblock Parameterized approximation scheme for the multiple knapsack problem.
\newblock {\em SIAM Journal on Computing}, 39(4):1392--1412, 2010.

\bibitem{Jan12}
Klaus Jansen.
\newblock A fast approximation scheme for the multiple knapsack problem.
\newblock In {\em International Conference on Current Trends in Theory and
  Practice of Computer Science}, pages 313--324, 2012.

\bibitem{Joh54}
Selmer~Martin Johnson.
\newblock Optimal two-and three-stage production schedules with setup times
  included.
\newblock {\em Naval research logistics quarterly}, 1(1):61--68, 1954.

\bibitem{Kel99}
Hans Kellerer.
\newblock A polynomial time approximation scheme for the multiple knapsack
  problem.
\newblock In {\em Randomization, Approximation, and Combinatorial Optimization.
  Algorithms and Techniques}, pages 51--62. 1999.

\bibitem{Kov85}
M.~Y. Kovalyov.
\newblock Efficient epsilon-approximation algorithm for minimizing the makespan
  in a parallel two-stage system.
\newblock {\em Vesti Academii navuk Belaruskai SSR. Seria
  phizika-matematichnikh navuk}, 1985.

\bibitem{LRS11}
Lap~Chi Lau, Ramamoorthi Ravi, and Mohit Singh.
\newblock {\em Iterative methods in combinatorial optimization}, volume~46.
\newblock Cambridge University Press, 2011.

\bibitem{PW00}
Florian~A Potra and Stephen~J Wright.
\newblock Interior-point methods.
\newblock {\em Journal of computational and applied mathematics},
  124(1-2):281--302, 2000.

\bibitem{RD69}
Basil~Cameron Rennie and Annette~Jane Dobson.
\newblock On stirling numbers of the second kind.
\newblock {\em Journal of Combinatorial Theory}, 7(2):116--121, 1969.

\bibitem{TMG18}
Weitian Tong, Eiji Miyano, Randy Goebel, and Guohui Lin.
\newblock An approximation scheme for minimizing the makespan of the parallel
  identical multi-stage flow-shops.
\newblock {\em Theoretical Computer Science}, 734:24--31, 2018.

\bibitem{VE00}
G.~Vairaktarakis and M.~Elhafsi.
\newblock The use of flowlines to simplify routing complexity in two-stage
  flowshops.
\newblock {\em IIE Transactions}, 32:687--699, 2000.

\bibitem{WCW19b}
Guangwei Wu, Jianer Chen, and Jianxin Wang.
\newblock On scheduling inclined jobs on multiple two-stage flowshops.
\newblock {\em Theoretical Computer Science}, 786:67--77, 2019.

\bibitem{WCW19a}
Guangwei Wu, Jianer Chen, and Jianxin Wang.
\newblock Scheduling two-stage jobs on multiple flowshops.
\newblock {\em Theoretical Computer Science}, 776:117--124, 2019.

\bibitem{WCW20a}
Guangwei Wu, Jianer Chen, and Jianxin Wang.
\newblock Improved approximation algorithms for two-stage flowshops scheduling
  problem.
\newblock {\em Theoretical Computer Science}, 806:509--515, 2020.

\bibitem{WCW20b}
Guangwei Wu, Jianer Chen, and Jianxin Wang.
\newblock On scheduling multiple two-stage flowshops.
\newblock {\em Theoretical Computer Science}, 818:74--82, 2020.

\bibitem{ZG86}
Li'ang Zhang and Suyun Geng.
\newblock The complexity of the 0/1 multi-knapsack problem.
\newblock {\em Journal of Computer Science and Technology}, 1(1):46--50, 1986.

\bibitem{ZV12}
X.~Zhang and S.~Velde.
\newblock Approximation algorithms for the parallel flow shop problem.
\newblock {\em European Journal of Operational Research}, 216:544--552, 2012.

\end{thebibliography}

\newpage

\appendix


\end{document}